   \newenvironment{myindentpar}[1]%
   {\begin{list}{}%
            {\setlength{\leftmargin}{#1}}%
            \item[]%
   }
   {\end{list}}
\newtheorem{lemma}{Lemma}[section]
\newtheorem{theorem}{Theorem}
\newtheorem{example}[lemma]{Example}
\newcommand{\Pb}{\mathbb P}
\newcommand{\M}{\mathcal M}
\newcommand{\Tint}{\mathring{\Theta}}
\newcommand{\Tova}{\overline{p(\Theta(a))}}
\newcommand{\Tovb}{\overline{p(\Theta(b))}}
\newcommand{\PP}{{\mathbb P}}
\newcommand{\EE}{{\mathbb E}}
\newcommand{\RR}{{\mathbb R}}
\begin{document}
\raggedright

\begin{center}
\textbf{Can we avoid `SIN' in the  house of `No Common Mechanism'?}\\
\textsc{Mike Steel}
\end{center}

Author's Affiliations\\
MIKE STEEL\\
Allan Wilson Centre for Molecular Ecology and Evolution, \\Biomathematics Research Centre, University of Canterbury, Private Bag 4800, Christchurch, New Zealand; \\
E-mail: m.steel@math.canterbury.ac.nz\\

{\em Keywords:} maximum likelihood, statistical consistency,  phylogenetic models

\begin{center}
{\bf Abstract}
\end{center}

In `no common mechanism' (NCM) models  of character evolution, each character can evolve on a phylogenetic tree under a partially or totally separate process (e.g. with its own branch lengths). 
 In such cases, the usual conditions that suffice to establish the statistical consistency of
tree reconstruction by methods such as maximum likelihood (ML) break down, suggesting that such methods may be prone to statistical inconsistency (SIN). In this paper we ask whether we can avoid SIN
for tree topology reconstruction when adopting such models, either by using ML or any other method that could be devised. We prove that it is possible to avoid SIN for certain NCM models, but not for others, and the results depend delicately on the tree reconstruction method employed.  We also describe the biological relevance of some recent mathematical results for the more usual `common mechanism' setting.  Our results are not intended to justify NCM, rather to set in place a framework within which such questions can be formally addressed.

\newpage

\section{Introduction}

Statistical Inconsistency (hereafter, SIN) in phylogenetics is the tendency of certain tree reconstruction methods to converge on an incorrect tree topology when applied to
increasing quantities of data that evolve under a given model.  The phenomenon has been well known for simple methods like maximum parsimony since the landmark paper of  \cite{fels} three decades ago. SIN has contributed
to the widespread acceptance of more sophisticated tree reconstruction methods such as maximum likelihood, corrected distance methods and Bayesian phylogenetics \citep{fel}, \citep{handbook}. These methods
are based explicitly on stochastic models of sequence evolution, and for which it is usually possible to establish statistical consistency when the model assumed by the investigator is also the one that  generated the data (see, for example, \cite{cha}, \cite{all}, \cite{sober2}).

A centrepiece of nearly all these models is the assumption that  character data (for instance, genetic sequence sites) evolve independently and identically. This `i.i.d.' assumption is standard in statistics, and implies that each character is described by essentially the same process and that the characters represent a finite random sample of this process.   This i.i.d. assumption applies even for mainstream models that allow
a distribution of rates across sites, such as the frequently used `Gamma+I' embellishment of the general time reversible (GTR) model. In these models, it is usually assumed that the rate at a site is chosen i.i.d. from a given distribution.  Such a `rates-across sites' model is subtly different from a model that assumes that each site has its own particular intrinsic rate (i.e. not chosen i.i.d. from some distribution) -- let us call it a `variable site rate' model.  Within such a model, the sequence sites  may still be independently generated, but they are not identically distributed (some sites simply {\em are} evolving faster than others).

If we just consider the frequencies of site patterns,  then the two models (rates across sites, and variable site rate) can produce (almost) identical data;  however, significant differences between the models can become apparent when we come to do tree reconstruction from given sequences. For example, in a maximum likelihood approach to tree reconstruction, in which we explicitly assume the variable site rate model we may wish to estimate a corresponding rate for each site that maximizes the probability of observing the given site pattern -- along with a shared underlying set of branch lengths common to all the sites (such an approach was described by Gary Olsen in \cite{swo}).  Each rate estimate - one for each site - might later be discarded as a `nuisance parameter' in the search for the underlying tree topology alone.   

This approach is quite different to doing the `usual' form of maximum likelihood estimation of a tree topology under a rates-across-sites model. We can ask if such an approach is statistically sound - in particular, can it lead to SIN?  What if we allow the branch lengths also to vary from character to character (the more usual form of  `no common mechanism')?  Is maximum likelihood under this model liable to SIN; if so, can any method reconstruct a tree under this model without SIN? These are the sort of questions we will address.  We will also describe the biological relevance of some recent mathematical results concerning tree reconstruction in the more usual `common mechanism' setting.

First we outline some of the motivations and concerns surrounding no common mechanism models in phylogenetics. We then discuss statistical consistency in a general setting -- first for common mechanism models, where much is known, then for no-common mechanism models, where there has been little analysis to date in phylogenetics.  In Theorem 1 we present some first results in this area and show how the details of the model (and the method) are crucial  to whether we are in danger of SIN when working with a no-common mechanism model.   We also describe different forms of SIN, and attempts to measure and manage it. The paper ends with a brief  discussion.

\section{Some reasons for and against `No common mechanism'?}

The idea that the evolution of characters in biology might be described by different sets of branch lengths underlies recent attempts to deal with phenomena such as heterotachy \citep{gau, phil}.  However, the idea dates back to the early days of molecular phylogenetics.  It is implicit in Walter Fitch's discussion of a covarion  model  \citep{fitch}, and  was discussed more explicitly  by  \cite{cav} in reference to his  simple two--state Poisson model.  In response to the question of whether the probabilities of change should be the same for all characters, Cavender remarked:
 \begin{myindentpar}{1cm}
 ``This assumption can and should be removed. It is unacceptable biologically because it says, for example, that an insect species is just as likely to lose (or acquire) wings as a spot of color.''
   \end{myindentpar}
   This comment seems reasonable for morphological characters, though even in that setting one might still expect some correlation in the relative probabilities of character change on a given branch across characters, as it may be more likely to observe changes on branches that correspond to long time intervals between speciation.  It is less clear that Cavender's comment should apply to aligned DNA sequence sites, each of which we might view as a random samples from a common process. Nonetheless different DNA sequence sites may be subject to differing selection pressures and the probability that a site mutation becomes fixed in a population may depend on structural or functional constraints; for example, whether the protein a gene codes for still folds correctly if the substitution changes an amino acid. These constraints  may vary with time and across the sequence, so enforcing an entirely `common mechanism' model may be too severe.  Similar comments apply to other types of genomic data that carry evolutionary signal.  In  linguistics, a model that allow each character to have its own branch lengths has also been developed for studying language evolution \citep{war}.

An additional reason why the No Common Mechanism (hereafter NCM) approach has received further attention is its relevance to those in the systematics community who advocate the use of maximum parsimony (herafter MP) for phylogeny reconstruction (e.g. \cite{far}).   This has been justified by an equivalence theorem  that demonstrates that MP is the maximum likelihood (hereafter ML) estimator of a tree under a NCM  model based on a symmetric Poisson process such as the Jukes-Cantor model \citep{tuf}.  A slightly more streamlined proof of this result  has recently been given by \cite{fischer}) and extensions of this equivalence theorem were described in \cite{penste1},  \cite{penste2}, and, most recently, \cite{fischer}. This last paper also showed  that the original equivalence theorem breaks down if one modifies the Poisson model slightly; either (i) by imposing a molecular clock, or (ii) by setting an absolute upper bound on the branch lengths.

The significance and implications of the equivalence between MP and ML estimation under NCM have aroused considerable interest (see, for example, \cite{so}, \cite{far}, \cite{hue2, hue}).  One view is that NCM model is sufficiently general as to capture `truth' and so should be the model of choice, thereby providing a justification for maximum parsimony (\cite{far}).  An alternative position is that NCM is far too parameter rich and  it ignores likely correlations between branch lengths due to
shared time frames of speciation intervals.  The NCM model required for the formal equivalence between MP and ML under the NCM is also based on a symmetric model of substitution change (such as a Jukes-Cantor model). Note that this model predicts (approximately) equal base frequencies, however a formal equivalence between MP and ML under the NCM model still holds if one regards the ancestral base in the tree at each site as a further parameter to be estimated (this would allow each site to have a `preferred' base, to reflect observed base composition in sequences). 

 In the sections that follow our aim is not to defend NCM models, but rather to determine which methods, if any, would allow phylogenetic tree topology to be estimated in a statistically consistent manner were one  to 
 adopt  various NCM models.

\section{ML estimation in general and in phylogenetics}

In this section we consider a general setting that includes phylogenetic tree reconstruction, and other problems where a discrete parameter (e.g. a tree, network, cluster) is being estimated from discrete data (e.g. DNA sequences, genes) in the presence of unknown additional parameters.

Suppose we have a sequence of observations $u_1, u_2, \ldots$ taking values in a finite set $U$ (the elements of this set can be arbitrary, but we will call them `site patterns' as we will usually be considering aligned DNA sequence sites). Suppose that these observations are generated independently by a model $M$ that has a fixed but unknown discrete parameter  $a$ that takes values in some finite set $A$, alongside other continuous parameters which may vary from observation to observation.  In the phylogenetic setting, $A$ will generally refer to the set of fully-resolved tree topologies on a given set of species, and the continuous parameters may refer to branch lengths or other aspects of the substitution model (site rate, transition/transversion ratio, shape parameter for a $\Gamma$ distribution of rates across sites etc).

In all such cases, $u_i$ is generated by a pair $(a, \theta_i)$ where $\theta_i$ lies in some  set $\Theta(a)$ which we will assume throughout is an open subset of Euclidean space. In the case of branch lengths on  a tree this means they should be strictly positive but finite real numbers.

\subsection{CM and NCM versions of a model}
In the Common Mechanism version of a model $M$, which we will denote by CM--$M$, it is
assumed that all the $\theta_i$ values are equal; that is, they take a common value, $\theta \in \Theta(a)$.  By contrast, in the No Common Mechanism version of $M$, which we will denote by NCM--$M$, the $\theta_i$ can take different values. Notice, however, that if these $\theta_i$ values are assigned randomly and independently from some common distribution (as is the case with most `rates across sites' models in phylogenetics) then this is just a CM version of a slightly more complex model $M^*$ that is derived from $M$.

\subsection{Maximum Likelihood under CM and NCM}

The ML estimation of a discrete parameter from $A$ under an NCM version of $M$ applied to data $(u_1, \ldots, u_k)$ selects the element $b \in A$ that maximizes
\begin{equation}
\label{like}
L(b|{\rm data}) := \sup_{(\theta_1, \ldots, \theta_k) \in \Theta(b)^k} \Pb[{\rm data}|b, (\theta_1, \ldots, \theta_k)] = \prod_{i=1}^k \sup_{\theta \in  \Theta(b)}\Pb[u_i |b, \theta],
\end{equation}
where  `sup' in (\ref{like}) refers to supremum (the maximum value either obtained or as a limit).   The second equality in
(\ref{like})  is justified by the assumption that the observations are independently generated by the model. For ML estimation under the CM version of $M$, the only difference is that the $\theta_i$ values are required to be identical (i.e. $\theta_i = \theta$ for all $i$).

Given two models $M_1$ and  $M_2$ (usually, but not necessarily the same model),  we refer to {\em ML estimation under $M_1$ applied to $M_2$--data} as the ML estimation under model $M_1$ of the discrete parameter in $A$  from data that has been produced under model $M_2$.
We are interested in determining when this method is statistically consistent (defined shortly) for various $M_1, M_2$, and if so, what can be said about the sequence length requirements for accurate estimation.

Given two models $M_1$ and $M_2$, we write $M_1 \subseteq M_2$ if $M_1$ is a submodel of $M_2$, that is, $M_1$ is a special case of $M_2$ once constraints are placed on its parameters; in particular,  for any model $M$,
CM--$M$ $\subseteq$ NCM--$M$.   If $M_2$ is not contained in $M_1$, the ML estimator is often said to be carried out under a `mis-specified model'  - in this case, we do not generally expect consistency so we are usually more interested in the regular case where the model in which ML is performed includes the model that generates the data, that is,  either $M_1 = M_2$ or $M_2 \subseteq M_1$ (one
exception occurs in Theorem~\ref{mainthm}(iv)  which provides an instance where ML estimation under a CM  model applied to a NCM version of that model is statistically consistent).

\subsection{Basic models for character evolution ($N_r$)}

It will be convenient to describe most of our results for a particular model of character evolution. The simplest such model assumes that the rates of substitution between each pair of the $r$ character states are equal -- this is sometimes referred to as the Neyman $r$--state model or the `symmetric r-state model';  here we call it the $N_r$--model (after the Neyman $r$--state model).
  In the special case where $r=4$, this is the familiar Jukes-Cantor model, while $r=2$ is often referred to as the
`Cavendar-Farris-Neyman (CFN) model'.  In the $N_r$ model, it  is usually (but not always) assumed that the frequency of bases at the root of the tree is the uniform distribution.

We will also consider the limiting case of the $N_r$ model as $r$ becomes large (for a given number of species).  This model, denoted here by $N_\infty$, is sometimes called the `Kimura-Crow infinite alleles model' \citep{kim2} or the `Random Cluster model' \citep{mossy}, and it models the setting where each substitution always results in a new state.
We will denote the CM and NCM versions of $N_r$ model ($r$ being either finite or infinite) by writing CM--$N_r$ and NCM--$N_r$, respectively.

\subsection{SIN for data generated under Common Mechanism (CM) models}

 In the `common-mechanism' (CM) model -- either for generating the data or for carrying out ML -- we require the  $\theta_i$ values to all be equal to some common value (call it $\theta$).   Note that even if we are not at all   interested in estimating the $\theta$ values, we often still have to consider their role in any probability calculations;   in this case, they are said to be `nuisance parameters'.

 A method $\M$  for estimating the discrete parameter in $A$  from a sequence  of independently generated observations is {\em statistically consistent} for data generated under a CM model if, for each $a \in A$, and $\theta \in \Theta(a)$, the probability that $\M$  correctly estimates $a$ from $(u_1, \ldots, u_k)$, when each observation $u_i$ is generated independently by the model with parameters $(a, \theta)$,  converges to $1$ as $k$ grows.
 If this condition fails, the method $\M$ leads to statistical inconsistency (SIN).  A related, but slightly different concept of statistical consistency exists, based on the strong (rather than the weak) law of large numbers, but we do not discuss it here.

Two types of SIN are possible in inferring phylogenetic tree topology. The more familiar and stronger form is when the method $\M$ can `positively mislead' -- that is, the probability that the method estimates an incorrectly resolved tree converges to $1$ as the sequence length grows; this is the type of inconsistency that occurs  with maximum parsimony in the `Felsenstein Zone' \citep{fels}.

However, a milder form of SIN occurs if, with increasing data, the method becomes unable to decide between the true tree and alternative trees. This can occur if the method returns a non-resolved tree, of which the true tree is just one resolution, and the probability of returning such a non-resolved tree from data generated under the CM model tends to $1$ as the sequence length grows.  Precisely such a situation has recently been shown to occur with `Ancestral Maximum Likelihood'  (AML). In a maximum-likelihood framework this method optimizes not just the tree topology and its  branch lengths but also a particular set of ancestral sequences, and then returns just the tree topology.  \cite{mos} showed that this AML estimation of tree topology applied to CM--$N_2$ data  can converge on the fully-unresolved star tree, when  the branch lengths of the fully resolved generating tree are in a certain range.  Whether AML can lead to the stronger form of SIN of being positively misleading is currently an open question.

Note that this milder form of SIN is quite different from not having sufficient data to resolve a tree topology (a much more familiar problem for biologists) -- we deal with this latter issue in Section~\ref{measure}. By contrast, mild SIN requires that the tree will never become fully resolved, no matter how much data we were to obtain. 

\subsection{Topological aspects of statistical estimation}

We now describe two conditions (`Identifiability' and `Kissing') that make accurate estimation of the discrete parameter $a \in A$ simultaneously possible and problematic in the following sense: Given `enough' data we can be sure to reconstruct $a$ correctly,  but we cannot say in advance how large `enough' will be. These two conditions typically hold in the reconstruction of fully-resolved phylogenetic trees as well as other related problems. To describe the conditions we need two further definitions.

 Given the model parameters $(a, \theta)$ let $p_{(a,\theta)}$ denote the associated probability distribution on site patterns, and let
$p(\Theta(a)) := \{p_{(a,\theta)}: \theta \in \Theta(a)\},$ which is a subset of the $|U|$--dimensional simplex of probability distributions on  the set $U$ of site patterns.
Also, given a subset $A$ of Euclidean space, let $\overline{A}$ denote its (topological) closure. 
We can now state the two conditions:   For all $a, b \in A$, with $a \neq b$  consider the following:

\noindent {\bf Identifiability condition:}
\begin{equation}
\label{ideeq}
p(\Theta(a)) \cap \Tovb  = \emptyset, \mbox{ and }
\end{equation}
\noindent {\bf Kissing condition:}
 \begin{equation}
 \label{kiseq}
 \Tova \cap \Tovb  \neq \emptyset.
 \end{equation}
In the phylogenetic setting, where we will often regard $\Theta$ as branch lengths, $p(\Theta(a))$ will be all the probability distributions we can obtain on site patterns by varying the branch lengths on the binary tree $a$ over all strictly positive but  positive values. The set $\Tovb$ includes not just  all probability distributions we can obtain on site patterns by varying the branch lengths on the binary tree $b$ over all strictly positive but  finite values, but also the limiting distributions as branch lengths tend to zero or to infinity (in all possible combinations).  We provide an example (and figure) to illustrate these concepts after some brief remarks.

In general, the identifiability condition (\ref{ideeq}) alone is sufficient to ensure that maximum likelihood in the CM setting will consistently reconstruct each discrete parameter in $A$ when the observations are generated under a common mechanism \citep{cha, inv3}.   The condition holds for many models in molecular systematics, including the general Markov model, a simple Covarion model, and models that exhibit low-parameter
rate variation across sites, such as the `GTR+$\Gamma$' model. 
An outstanding unsolved problem is whether the widely-used `GTR+$\Gamma$+I' model satisfies the identifiability condition if both the shape parameter and the proportion of invariable sites are unknown \citep{all3}.  Shortly we will describe some models for which  the identifiability condition has been shown to fail.

The kissing condition (\ref{kiseq}) is also relevant to phylogenetics, indeed it applies to {\em all} models  of character evolution used for inferring tree topology. Any  two different trees can produce identical data if the lengths of the interior branches on which the two trees differ shrink to zero and/or the lengths of all (or `most') of the pendant edges grows to infinity  (`site saturation'); these phenomena reflects the geometry of tree-space discussed in \cite{kim} and  \cite{mou} where quite different trees can come arbitrarily close together (`kiss') in terms of the distribution on site patterns they can produce.    This means that the sequence length required to reconstruct a tree correctly by any method,  tends to infinity as the interior branches shrink in length, or as the pendant ones grow.  

Note that this `tree-space' is related to, but quite different from, the tree space described by \cite{bil} -- for example, the latter tree space regards two trees of different topologies as
becoming infinitely far apart as we grow the length of all their branches; however in the tree space here, we regard them as becoming closer together since they are tending to produce exactly the same data (random sequences).   This is illustrated in the following example. 

\begin{example}
{\rm
We can visualise conditions (\ref{ideeq}) and (\ref{kiseq}) by means of a simple but instructive example.   Consider the three rooted binary trees on leaf taxa $1,2,3$, which have branch lengths that satisfy a molecular clock. Let $L$ denotes the length of the interior edge length, and $l$ the length of the shorter pendant edge length, so $L+l$ is the length of the longer pendant edge length.  For the tree $a_1 = 1|23$,
the set $\Theta(a_1)$ is the infinite open first quadrant of the plane: $\{(l,L): L,l>0\}$.

\begin{figure}[ht]
\resizebox{16cm}{!}{
\input{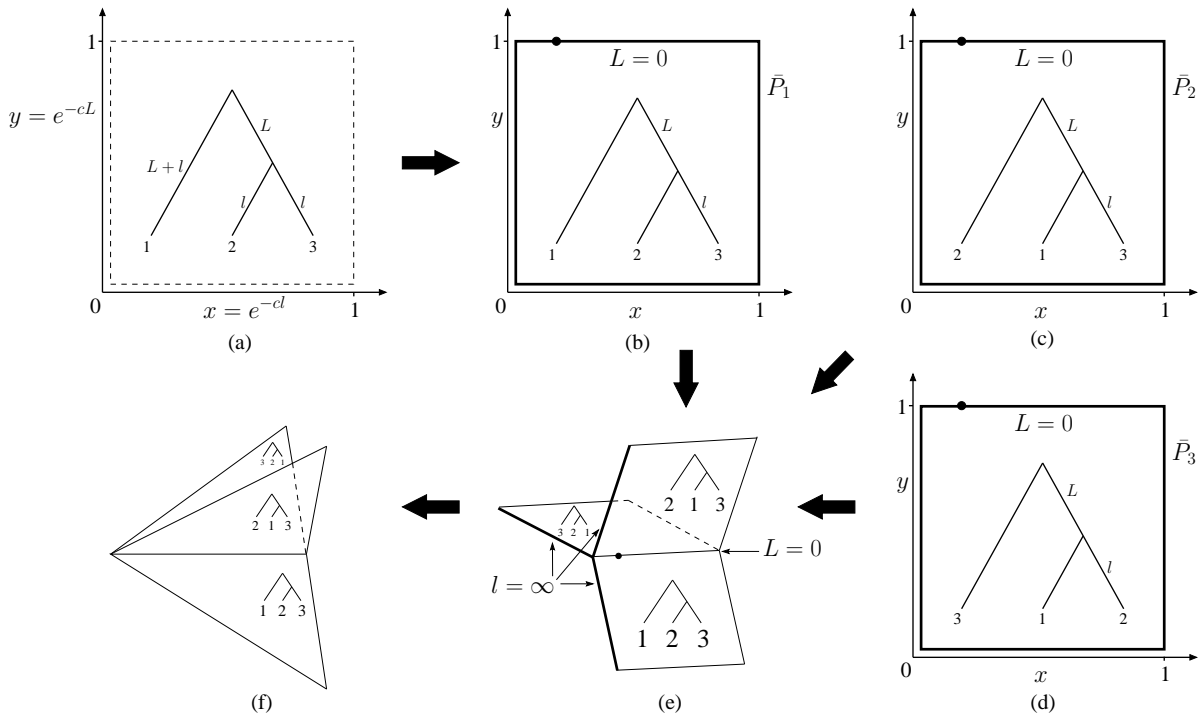}
}
\caption{The `paper-dart' representation of tree space for a Markov process on three taxa subject to a molecular clock (see text for details).}
\label{fig1}
\end{figure}
}
\end{example}

Now consider the function that assigns to $(l, L)$ the probability distribution on site patterns under some model.
 For simple models, such as the $N_r$ model, this function can be described as the composition of two continuous one-to-one and onto  functions.  The first map
associates $(l,L)$ with the vector $(x,y) = (e^{-cl}, e^{-cL})$ where $c$ is a fixed constant (dependent on the model). The image $P_1$  of this map is the open square $(0,1) \times (0,1)$ (Fig. 1(a)).
The second map $\xi: P_1 \rightarrow [0,1]^{|U|}$  sends $(x,y)$ to a probability distribution on site patterns that is determined by the branch lengths associated with the pair $(x,y)$.

For the $N_2$ model, with a uniform probability on the two states at the root,  we have eight site patterns, $(x,y) = (e^{-2l}, e^{-2L})$ (i.e. $c=2$),  and the eight components of $\xi(x,y)$ take just three different values according to whether (i) all three leaves are in the same state, (ii) leaf 1 is in the same state as just one of the other two leaves, or (iii) leaf 1 is in a different state to the other two leaves.  Using standard Hadamard representation (see e.g.  \cite{semste}) these three probabilities, which apply for two, four and two site patterns, respectively, are:
$ \frac{1}{8}(1+x^2+2x^2y^2),  \frac{1}{8}(1-x^2),   \frac{1}{8}(1+x^2-2x^2y^2).$

Moreover, the map $\xi$ extends to $\overline{P_1}$ (the closure of $P_1$ which is the closed square as shown in Fig. 1(b)) and $\overline{p(\Theta(a_1))} = \xi(\overline{P_1})$. Similarly,
for each of the other two trees $a_2= 2|13$ and $a_3=3|12$, we have: $\overline{p(\Theta(a_i))} = \xi(\overline{P_i})$, $i=2,3$ where $\overline{P_2}$ and $\overline{P_3}$ are the corresponding
closed squares for the other two trees (Fig. 1(c,d)).  Each point on the top boundary of these $\overline{P_1}$ (corresponding to $L=0$) has  corresponding points on the top boundary of $\overline{P_2}$
and  $\overline{P_3}$ that induce exactly the same probability distribution on site patterns, and so these three regions `kiss' at each such point (one such shared point is indicated in each region in Fig. 1 (b,c,d)). Thus we can identify
(or glue) these three squares along their top boundary (Fig. 1(e)).  Finally, any point on the front boundary (corresponding to $l = \infty$) leads to the same probability distribution on site patterns - for the $N_r$ model, this would simply assign each of the possible site patterns equal probability.  Thus the whole of this $Y$--shaped part of the complex in Fig. 1(e) is identified to a single point, resulting in the final `paper-dart' representation of the tree space shown in Fig. 1(f) (an example of a `CW complex' in topology).

The main point about this complex is that a one-to-one correspondence can be seen between the points on the `paper dart' and the probability distribution on site patterns that can be induced by  3-taxon trees under a molecular clock where the edge lengths can vary from $0$ to (actual) infinity.  Note that the `spine' of the dart corresponds to the unresolved star tree, while the `head' of the dart corresponds to pendant branches of infinite length.  The `identifiability' condition (\ref{ideeq}) is satisfied since the interior of any one of the three wings does not intersect any other wing (even at the boundary of that other wing), while the
kissing condition (\ref{kiseq}) holds since the wings all touch each other along the central spine (and also, for a different reason, at the front tip).

\subsection{Failure of identifiability for certain CM models}

Note that the identifiability condition generally applies to simple models of site substitution in phylogenetics when $A$ is the set of fully-resolved (binary) phylogenetic trees.  However, it can collapse in three important cases: The first is if  we enlarge  $A$ to the set of all phylogenetic trees (binary and non-binary) on a given set of leaf taxa, since if $a$ has a polytomy, and $b$ is a tree obtained by resolving that polytomy, then:
$$p(\Theta(a)) \cap \Tovb  = p(\Theta(a)) \neq \emptyset.$$
Indeed, even under the CM model, the reconstruction of general (including non-binary) trees using maximum likelihood will not be statistically consistent, since if the generating tree is non-binary, the ML tree will typically be a resolution of this tree for any sequence length (though the lengths of the branches that resolve the polytomy will converge to zero in probability as the sequence length $k$ grows \citep{cha}).
It is possible to consistently reconstruct general (including non-binary) trees, either by modifying ML so as to collapse edges whose length is less than, say $\log(k)/\sqrt{k}$ or by adopting alternative approaches to  tree reconstruction \citep{cho}.

The second situation where the identifiability condition (\ref{ideeq}) may collapse (even when we restrict $A$ to be a the set of fully-resolved trees) is when we have a phylogenetic mixture for certain models.  In this case not only can (\ref{ideeq}) fail, but the examples constructed in (\cite{mat1}) for two-tree mixtures under the $N_2$ model satisfy the stronger violation condition:
\begin{equation}
\label{interinter}
p(\Theta(a)) \cap p(\Theta(b))  \neq \emptyset.
\end{equation}
In the setting of (\cite{mat1}),  $\Theta(a)$ refers to all triples $(\lambda, \lambda', p)$ where $\lambda$ and $\lambda'$ are assignments of positive but finite branch lengths to tree $a$, while $p$ (respectively $1-p$) is the probability that the site evolves under the first (respectively second) set of branch lengths.   Thus (\ref{interinter}) describes the situation in which two
fully resolved trees of differing topology can induce exactly the same probability distribution on site patterns under their particular mixture processes.  To better visualize this, note that in our `paper dart' example earlier, condition (\ref{interinter}) does not occur, but if it did, its geometric interpretation would be that an interior portion (or point)  of one `wing' of the paper dart gets glued to a portion (or point) of a different
`wing'.

A third situation where the identifiability condition (\ref{ideeq}) may collapse is when there is a distribution of rates across sites with two many unknown parameters. Indeed, it was shown in \cite{ras}, that an even stronger violation than (\ref{interinter}) is possible, namely all fully resolved trees on a given set of species can induce the same probability distribution on site patterns for appropriately chosen (but positive and finite) branch lengths in an $N_2$ model and distributions of rates across sites-- in other words:
$$\bigcap_{a \in A} p(\Theta(a)) \neq \emptyset,$$
where $\Theta(a)$ is the set of branch lengths and the parameters describing the distribution of rates across sites.

In cases where the identifiability condition (\ref{ideeq}) fails (i.e. when $p(\Theta(a))\cap \Tova \neq \emptyset $ for  some pair $a \neq b$), for example when a model is `over-parameterized',  we have a useful distinction based on whether or not the overlap of $p(\Theta(a))$ and $\Tova$ has the same or smaller dimension than $p(\Theta(a))$.  In the latter case, although the model fails to
satisfy the strict identifiability condition, it fails only on a subset of $\Theta(a)$ of zero relative volume -- in this case, the tree topology is said to be  {\em generically identifiable} under the model.  The distinction between generic and strict identifiability is important for trying to decide whether SIN is  `theoretically possible but unlikely to occur in practice' or whether there is a reasonable chance of being in a region of parameter space where  we might be unable to  distinguish between two competing trees, even from infinite data.
The distinction has often been overlooked in earlier studies, but is carefully discussed now, particularly as generic identifiability is a notion that sits much more comfortably with current mathematical methods for studying the properties of Markov models based on algebraic geometry  and phylogenetic invariants \citep{all2, all3}.

\section{SIN in the  No Common Mechanism (NCM) setting}
In the NCM model, the $\theta_i$ values may vary in some unknown way.  In particular, we do not assume they are selected i.i.d. from some distribution.
 By analogy with the CM setting, it is tempting to extend the  definition of statistical consistency of a method $\M$  to the NCM setting by the following slight modification:
``For each $a \in A$, and sequence $\theta_i \in \Theta(a)$, the probability that $M$  correctly estimates  $a$ from $(u_1, \ldots, u_k)$, when each $u_i$ is generated independently by the model with parameters $(a, \theta_i)$,  converges to $1$ as $k$ grows.''

However, this condition as stated is too strong: in the tree setting, if the branch lengths grew to infinity (or shrank to zero) sufficiently fast with each observation then accurate tree reconstruction  by any method
for any model can be ruled out (by the Kissing condition).  Nevertheless there are meaningful notions of statistical consistency in the NCM setting, which  generalize the CM definition. 
Recalling that $\Theta(a)$ is  an open subset of Euclidean space, and that a {\em compact} subset of Euclidean space is any subset that is closed and bounded we will consider the following:

{\bf Definition:}  We  will say that a method $\M$ for reconstructing a discrete parameter in a finite set $A$ is {\em statistically consistent for data generated by a NCM model}  if it satisfies the following condition:

   \begin{myindentpar}{1cm}
For each $a \in A$, and every compact subset $C$ of $\Theta(a)$, the probability that $\M$  correctly estimates $a$ from $(u_1, \ldots, u_k)$, when each $u_i$ is generated independently by the model with parameters  
$(a, \theta_i)$, where $\theta_i \in C$,
converges to $1$ as $k$ grows.
   \end{myindentpar}
   This definition is equivalent to the definition of statistical consistency under the CM version of the model if we further insist that all the $\theta_i$ values are equal, and in this case the choice of the compact set $C$
   can be restricted to single points in  $\Theta(a)$.
   
Note also that when we perform ML estimation under CM or NCM we do not require that the $\theta_i$ values associated with $a$  lie within any given compact subset $C$ of $\Theta(a)$; they can take any value in $\Theta(a)$.

\subsection{Which phylogenetic models and methods can lead to SIN?}
The following main result shows that the issue of statistical consistency under NCM is a delicate one, depending on the details of the model and the method. The full mathematical proof of the following results is provided in the Appendix.

\newpage
\begin{theorem}
\label{mainthm}
\mbox{}
\begin{itemize}
\item [{\rm i. }] {\rm [{\bf Inconsistency of ML for NCM--$N_r$ model}]}  Maximum likelihood estimation of fully-resolved tree topology under the NCM--$N_r$ model applied to NCM--$N_r$ data (or even CM--$N_r$ data) is
 statistically inconsistent for any finite $r>1$.  Moreover, no tree reconstruction
method is statistically consistent for NCM--$N_2$ data.
\item [{\rm ii. }] {\rm [{\bf Consistency for the NCM--$N_4$ model}]}  In contrast to part (i), there is a statistically consistent method for inferring fully-resolved tree topology  from data generated by an NCM Jukes-Cantor model.
\item[{\rm iii.}] {\rm[{\bf Consistency for NCM models with a molecular clock}]} Neighbour-joining on uncorrected sequence dissimilarity is a statistically consistent method for inferring fully-resolved tree topology from data generated by an NCM model where each site evolves under its own General Time Reversible (GTR) process (with its own strictly positive rate matrix and branch lengths) provided that, at each site, the branch lengths  are clock-like on the generating tree.
\item [{\rm iv. }] {\rm [{\bf Consistency of ML for NCM--$N_\infty$ model}]} Maximum likelihood estimation of fully-resolved tree topology under the NCM--$N_\infty$  (or even under the CM--$N_\infty$ model)  of NCM--$N_\infty$ data is statistically consistent.
\end{itemize}
\end{theorem}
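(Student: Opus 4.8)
\textbf{Proof proposal for Theorem~\ref{mainthm}(iv).}

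The plan is to exploit the extremely rigid form the likelihood takes for the $N_\infty$ model, which reduces the whole assertion to a short combinatorial fact about which leaf partitions a binary tree can ``display''. First I would observe that under $N_\infty$ a site pattern is nothing but a partition $\pi$ of the leaf set: two leaves carry the same state exactly when no substitution falls on the path joining them, so the pattern is the partition of the leaves into the vertex‑components of the tree after the ``substituted'' edges are deleted. Call $\pi$ \emph{convex on $b$} if it arises this way on the binary tree $b$; equivalently (a short lemma I would record) if the minimal subtrees of $b$ spanning the blocks of $\pi$ are pairwise vertex‑disjoint. The key computation is then that, writing the edge substitution probabilities $q_e\in(0,1)$ as functions of the branch lengths,
\[
\sup_{\theta\in\Theta(b)}\Pb[\pi\mid b,\theta]=
\begin{cases}1,&\pi\text{ convex on }b,\\ 0,&\text{otherwise,}\end{cases}
\]
because $\Pb[\pi\mid b,\theta]=\sum_{F:\,\mathrm{leafpart}(F)=\pi}\prod_{e\in F}q_e\prod_{e\notin F}(1-q_e)$ is an \emph{empty} sum (hence identically $0$) when $\pi$ is not convex on $b$, and, when $\pi$ is convex on $b$, tends to $1$ as $\theta$ approaches the point of $\overline{\Theta(b)}$ that forces precisely one realizing substitution set (probability $1$ on the cut edges, $0$ on the others). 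Consequently the NCM--$N_\infty$ likelihood collapses to an indicator,
\[
L(b\mid u_1,\dots,u_k)=\prod_{i=1}^{k}\sup_{\theta\in\Theta(b)}\Pb[u_i\mid b,\theta]=\mathbf{1}\{u_1,\dots,u_k\text{ are all convex on }b\},
\]
and under CM--$N_\infty$ one still has $L(b\mid\mathrm{data})=0$ as soon as some $u_i$ fails to be convex on $b$, while $L(b\mid\mathrm{data})>0$ whenever every $u_i$ is convex on $b$ (evaluate at any interior $\theta$ with all $q_e\in(0,1)$, where every realizable pattern has positive probability).

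Second, I would isolate what separates two trees. For a nontrivial split $U|V$ (both sides of size $\ge2$), the two‑block partition $\{U,V\}$ is convex on $b$ if and only if $U|V$ is a split of $b$: one implication is immediate, and the converse follows from the vertex‑disjointness reformulation by checking that every internal vertex of $b$ lies in $b|_U$ or in $b|_V$. Hence if $a\neq b$ are distinct binary trees on the same $n\ge4$ leaves, their nontrivial split sets differ, and since each has exactly $n-3$ nontrivial splits neither contains the other; so there is a nontrivial split $U_b|V_b$ of $a$ that is not a split of $b$, and the partition $\pi_b:=\{U_b,V_b\}$ is convex on $a$ but not on $b$.

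Third I would assemble the consistency statement. Fix $a\in A$ and a compact $C\subseteq\Theta(a)$. Every observed $u_i$ is generated on $a$, so each $u_i$ is automatically convex on $a$; hence $L(a\mid\mathrm{data})>0$ for both the CM and the NCM version. For a competing tree $b\neq a$, the pattern $\pi_b=\{U_b,V_b\}$ occurs at a site exactly when the single edge $e$ of $a$ carrying the split $U_b|V_b$ is substituted and no other edge is, an event of probability $q_e\prod_{e'\neq e}(1-q_{e'})\ge\varepsilon_b>0$ uniformly over $\theta\in C$ (the relevant function is continuous and strictly positive on $\Theta(a)$, hence bounded below on $C$). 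Therefore the chance that $\pi_b$ is absent from all $k$ sites is at most $(1-\varepsilon_b)^k$, and a union bound over the finitely many $b\neq a$ shows that with probability tending to $1$ every $b\neq a$ receives some pattern not convex on it, so $L(b\mid\mathrm{data})=0<L(a\mid\mathrm{data})$ for \emph{all} $b\neq a$ simultaneously. On this event maximum likelihood (under CM--$N_\infty$ or NCM--$N_\infty$) has the unique optimum $a$, which is exactly the required statistical consistency; the CM-under-NCM statement rides along for free precisely because the two likelihoods, though numerically different when positive, vanish on exactly the same trees.

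The step I expect to be the main obstacle is the first one: recognising and proving that the per‑site maximised likelihood for $N_\infty$ is $\{0,1\}$‑valued and governed purely by convexity of the observed partition, and in particular verifying the value $1$ carefully as a limit at the boundary of the \emph{open} parameter set $\Theta(b)$. Once that is in hand, the combinatorial separation of trees and the elementary large‑$k$ estimate are routine.
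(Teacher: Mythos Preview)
Your proof is correct and rests on the same key observation as the paper's: under $N_\infty$ a tree has positive likelihood for the data (in either the CM or NCM version) precisely when every observed character is homoplasy-free (your ``convex'') on it, so ML reduces to checking which binary trees are compatible with all observed partitions. The paper states this in one line and then invokes a different separating device: rather than your $2$--block split partitions, it uses induced quartets, citing \cite{war} for the fact that if, for every quartet $ab|cd$ displayed by $T$, some character places $a,b$ together and $c,d$ together in distinct states, then $T$ is the unique compatible tree. The corresponding probability lower bound in the paper is $p_sq_s^4$ (smallest edge cut-probability times the fourth power of the largest path ``no-cut'' probability), combined with a Bonferroni bound over $\binom{n}{4}$ quartets and Lemma~\ref{lem1} (nested open exhaustion) in place of your direct compactness argument.

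Your route is a little more self-contained: you prove the split/convexity equivalence yourself rather than citing the quartet result, and you work directly with the compact-set definition of NCM consistency instead of passing through Lemma~\ref{lem1}. The paper's quartet bound is marginally more robust in that it tracks explicit sequences $\epsilon_k\to 0$, $L_k\to\infty$, but for the theorem as stated your argument is equally complete. One cosmetic remark: your claim that $\pi_b$ arises ``exactly when'' only the edge $e$ is cut is in fact true (both sides of a split span their full half of the binary tree, forcing $F=\{e\}$), but even the weaker inequality $\Pb[\pi_b]\ge q_e\prod_{e'\ne e}(1-q_{e'})$ would have sufficed.
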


\section{Measuring SIN, and taking precautions against it}
\label{measure}

We have provided a topological view of tree reconstruction, but there is an equivalent metric view.  To explain this, take  any continuous distance function $d$ on probability distributions on  $U$ (the set of site patterns).  For example, we might take the {\em variational distance}  defined by
$d(p,q) = \frac{1}{2}\sum_{u \in U} |p(u)-q(u)|.$
An alternative way of expressing the identifiability and kissing conditions ((\ref{ideeq}) and (\ref{kiseq})) is to require, for all $a, b \in A$, with $a \neq b$:
\begin{equation}
\label{inf1}
\inf_{\theta' \in \Theta(b)} d(p_{(a,\theta)}, p_{(b, \theta')}) > 0, \mbox{ and }
\end{equation}
\begin{equation}
\label{inf2}
\inf_{\theta \in \Theta(a), \theta' \in \Theta(b)} d(p_{(a,\theta)}, p_{(b, \theta')}) = 0,
\end{equation}
respectively, where `inf' refers to infimum (the minimal value achieved or in the limit).   These are identical conditions to (\ref{ideeq}) and (\ref{kiseq}), respectively, by standard arguments from analysis, based on the Bolzano-Weierstrass Theorem.

One advantage of this metric viewpoint is that a strictly positive value in (\ref{inf1}) not only tells us that  ML is consistent in the CM setting,  but also the magnitude of this value  sets  explicit upper and lower bounds on how much data (sequence length) we require in order to  reconstruct the discrete parameter (tree)  accurately \citep{inv2, inv3}.    The more closely a tree with appropriately chosen branch lengths (or other parameters) can fit the probability distribution on site patterns of a different tree,  the more sequence sites it will take to tell which of the two trees produced the data.

The sequence length required for accurate tree reconstruction (under any CM model) also depends on the number of species being classified. Quantifying this relationship
is particularly challenging  mathematically  (see e.g. \cite{das}).  Various optimal or near-optimal results have been established, which usually require developing a new and clever tree reconstruction method (not because such methods are necessarily better than ML estimation but rather because it has been difficult so far to rigorously establish good bounds on the sequence length required for ML to reconstruct
a large tree accurately).

Much less is known about the sequence length requirements for tree reconstruction under NCM models.  In the case of the $N_\infty$ model (Theorem 1, Part 1(iv)),  the sequence lengths required for accurate tree reconstruction from data generated by an NCM version of the model are essentially the same as for the CM version of the model, provided that in both models, we insist that all  edge lengths are bounded between $(r,s)$ where $0<r\leq s< \frac{1}{2}$. However it seems entirely possible  that for a finite-state $N_r$ model such as the Jukes-Cantor model, the sequence length required for accurate tree reconstruction from NCM--$N_4$ data will be much larger than for a  CM--$N_4$ model with comparable upper and lower bounds on the branch lengths.   If so, this would be another example of where the
two models (finite-state versus infinite-state) have quite different statistical properties. Two other examples are:
\begin{itemize}
\item The sequence length required to resolve a short interior branch of the tree of length $\epsilon$ (from the two alternative tree topologies obtained by swapping branches across the edge)  grows  at the rate $1/\epsilon^2$ for the finite-state model  but just at $1/\epsilon$ for the infinite state model, as $\epsilon \rightarrow 0$ \citep{mossy, inv2}.
\item The sequence length required to determine which of two resolved trees, that classify the same $n$ species, generated the sequence data must grow under the finite-state model, but can be independent of $n$ for the infinite state model \citep{testing}.
\end{itemize}

Returning to the NCM--$N_r$ model of   \cite{tuf},  where branch lengths are allowed to vary freely from site to site, one attempt to avoid the massive over-parameterization of this model is to assume that these different
branch lengths (between characters and across sites) are assigned randomly (i.i.d.) according to some common fixed prior distribution. This `Bayesian NCM' model was explored by \cite{hue}.  As the authors noted, this
model has an interesting property: if one has an underlying $N_r$ model then this `Bayesian NCM model' induces exactly the same probability distribution on site patterns as what might be called the `Ultra-common mechanism model' (UCM), where each character has the same branch length, and these branch lengths are the same {\em across the tree}.  This  formal equivalence between such a tightly constrained model (which would never be used in ordinary phylogenetic practice) and a type of NCM model seems at first a little paradoxical, until it is realized that the
assumption of common fixed prior distribution on branch lengths (across the characters and across the trees) is a very strong `commonality assumption'.  The formal equivalence becomes only approximate for more complex substitution models (technically, this is the result of the rate matrix having multiple nontrivial eigenvalues).

A related but less constrained version of this Bayesian NCM model was developed by  \cite{wu}.  In this model, the tree has underlying branch lengths that are common across the characters, but can vary across the tree, and each site has an intrinsic rate which multiplies the branch lengths across the tree. But in contrast to Olsen's model of allowing this per-site rate to be a free parameter (to be optimized in ML) \cite{wu} assume that this rate parameter is selected i.i.d. from a fixed distribution of rates across sites. For this model, when the underlying substitution process is (say) a Jukes-Cantor model, this intermediate-level Bayesian NCM model assigns exactly the same probability distribution on site patterns as a model in which all the sites evolve i.i.d. under a Jukes-Cantor model (with no rate variation across sites).   As with the UCM model, the formal equivalence becomes only approximate for more complex substitution models, and for the same reasons.  However, in this more general setting, \cite{wu} showed how
a `log-det' transformation gives a statistically consistent way to establish the tree topology from data generated under this model.

The statistical properties of NCM have also been investigated recently from standard model-selection approaches such as AIC \citep{hue2}.   It is clear that NCM can confer higher likelihood scores than a CM model for any data, since one has so much flexibility when choosing the nuisance parameters to get a good fit. Model selection techniques such as AIC (as well as BIC, and other variants) penalize models that are too parameter-rich by subtracting from the log-likelihood of the model a term that depends on the number of parameters \citep{aka}. Under this criterion it seems unlikely that the full-blown NCM model will ever be favored over CM models under AIC.  However it is not entirely clear that the conditions required to justify the AIC criterion extend rigorously to this NCM setting.

\subsection{Can SIN still occur if we  enforce a `no Kissing' condition?}

We conclude this section by pointing out that the Kissing condition (\ref{kiseq}) (or, equivalently (\ref{inf2}))  is not necessarily the cause of SIN in the NCM-setting.   To see this, suppose we constrain the $N_r$ model so
that all the branch lengths in a tree lie between $\epsilon$ and $-\log(\epsilon)$ for some $\epsilon>0$. Let us call this the $N_r^\epsilon$ model.  Then, under the $N_r^\epsilon$ model we have, for different resolved phylogenetic trees $a,b$ on the same set of species:
\begin{equation}
\inf_{\theta \in \Theta(a), \theta' \in \Theta(b)} d(p_{(a,\theta)}, p_{(b, \theta')}) \geq q >0
\end{equation}
where $q = q(\epsilon)$ converges monotonically to $0$ as $\epsilon \rightarrow 0$.
Consequently, for $\epsilon>0$ the Kissing condition fails -- topologically different trees cannot `look' arbitrarily close through the eyes of data produced under a CM model.  However ML estimation under the NCM--$N_r^\epsilon$ model,  can again be statistically inconsistent. Indeed suppose we take any tree and  branch lengths in the interior of a Felsenstein Zone for that tree (i.e. a set of branch lengths where maximum parsimony would converge on an incorrect tree for data produced under the CM--$N_r^\epsilon$ (or NCM--$N_r^\epsilon$ model). Then we can chose $\epsilon>0$ small enough so that the ML estimate of the tree under the NCM--$N_r^\epsilon$ model converges on wrong tree when applied to the CM--$N_r^\epsilon$ data produced from the original tree with its Felsenstein Zone branch lengths.  The formal proof of this claim is given in the Appendix.

\section{Concluding comments}

Making molecular phylogenetic models more `realistic', and  thereby capturing more of the complexities of how DNA evolves   -- across the genome and over different time scales -- usually requires
introducing a number of adjustable parameters. If these parameters can be independently estimated from other data, or if they enter into the model in ways that are not problematic for tree inference (as in Theorem 1(ii-- iv)), or if they follow some common distribution that is described by few if any  unknown parameters),  then  statistically consistent inference of tree topology is achievable.  However, in general, treating branch lengths, and other model parameters as unknown quantities can drive reconstruction methods  to SIN.

Theorem 1 (parts (ii--iv)) provides no real endorsement for NCM models, but it shows that sweeping assumptions that such models must necessarily lead methods to SIN are incorrect.   Such arguments typically proceed as follow:  in NCM models the number of nuisance parameters grows with $k$ and we are unable to estimate them with any precision, thus the usual conditions that suffice for the consistency of maximum likelihood estimation \citep{wal} fail and so the method will be inconsistent. All but the final conclusion of this last sentence are correct -- the failure of a sufficient condition for a statement to be true is not sufficient for it to be false! Indeed, Theorem 1 provides specific cases  where NCM-maximum likelihood estimation under is consistent for certain NCM models. 

Even when statistically consistent methods exist for an NCM model, it is still possible that ML can be statistically inconsistent (this contrasts with what happens in the CM setting, where ML is generally consistent if any consistent method exists).  This leads to a somewhat uncomfortable position for those who wish to provide some statistical justification for the  use of maximum parsimony  as the ML estimator under the NCM model of \cite{tuf} --  By Theorem \ref{mainthm}(i), such a method lives in a state of SIN, yet this could be avoided for this NCM model  if one were to renounce maximum parsimony in favour of a quite different method, such as  one based on linear phylogenetic invariants (the method used in the proof of Theorem \ref{mainthm}(ii)).  

However, this is no strong argument in favour of linear invariants, as they tend to be very inefficient in their ability to extract phylogenetic signal from data \citep{hill94}.  A method based on linear phylogenetic invariants may be guaranteed to converge on the right tree eventually, even under the NCM model, but this may require an astronomical
amount of data.  By contrast, methods such as maximum parsimony appear to be quite efficient at extracting phylogenetic signal when the generating tree branch lengths are some way from those portions of parameter space that lead to inconsistency \citep{hill}.   Thus, although statistical consistency is desirable, it should not over-ride all other considerations -- for example, a powerful method that is consistent in most regions of parameter space would generally be preferred over a statistically consistent method that may requires huge amounts of data to converge.

Of course many of this results in this paper are confined to very simple models (such as the Jukes-Cantor); we have chosen to do this for two reasons:  firstly, they are sufficient to demonstrate that
even with very simple models, all possibilities (consistency and SIN) are possible given slight tweaks of the assumptions or method; secondly, the analysis of more complex models is beyond the scope of this paper, but would be a worthy objective for future work.

In summary, the question of whether one is prone to SIN by adopting a particular NCM model and a particular method of inference has a more complex answer than in the CM setting -- it depends subtly  on the details of the model and on the method.   The full-blown generality of the NCM of \cite{tuf} is unnecessarily over-parameterized for most data, being a model that was developed to prove a formal equivalence between methods rather than as a model of choice.  Far from being a justification of MP, its plethora of ever-growing parameters would surely have not seemed `parsimonious' to William of Occam.   At the other extreme are simple attempts to include NCM within a Bayesian framework; these avoid SIN, but at the price of forcing the NCM model into a CM straightjacket by viewing the parameters as samples from a common underlying prior.   
Between these extremes there would seem to be an endless variety of possibilities.
The development of carefully constrained yet parameter-rich models, guided by model selection criteria, and which recognize that characters evolve under different processes dependent on their biochemistry,  will surely play a significant  role in future phylogenetic methodology.

\section{Acknowledgments}
I wish to thank the Royal Society of New Zealand for funding under its James Cook Fellowship scheme, 
Dietrich Radel for typesetting Figure 1, and Elliott Sober and Peter Lockhart for some helpful comments on an earlier version of this manuscript. 

\bibliographystyle{sysbio}
\bibliography{SIN}

\begin{thebibliography}{47}
\providecommand{\natexlab}[1]{#1}
\providecommand{\selectlanguage}[1]{\relax}
\providecommand{\bibAnnoteFile}[1]{%
  \IfFileExists{#1}{\begin{quotation}\noindent\textsc{Key:} #1\\
  \textsc{Annotation:}\ \input{#1}\end{quotation}}{}}
\providecommand{\bibAnnote}[2]{%
  \begin{quotation}\noindent\textsc{Key:} #1\\
  \textsc{Annotation:}\ #2\end{quotation}}

\bibitem[{Akaike(1973)}]{aka}
Akaike, H. 1973. Information theory as an extension of the maximum likelihood
  principle. Pages~267--281 \emph{in} Info. Theory (B.~Petrov and F.~Csaki,
  eds.). Akademiai Kiado.
\bibAnnoteFile{aka}

\bibitem[{Allman et~al.(2008)Allman, An\'{e}, and Rhodes}]{all3}
Allman, E.~S., C.~An\'{e}, and J.~A. Rhodes. 2008. Identifiability of a
  markovian model of molecular evolution with gamma-distributed rates. Adv.
  Appl. Probab. 40:228--249.
\bibAnnoteFile{all3}

\bibitem[{Allman and Rhodes(2006)}]{all}
Allman, E.~S. and J.~A. Rhodes. 2006. The identifiability of tree topology for
  phylogenetic models, including covarion and mixture models. J. Comput. Biol.
  13:1101--1113.
\bibAnnoteFile{all}

\bibitem[{Allman and Rhodes(2008)}]{all2}
Allman, E.~S. and J.~A. Rhodes. 2008. Identifying evolutionary trees and
  substitution parameters for the general markov model with invariable sites.
  Math. Biosci. 211:18--33.
\bibAnnoteFile{all2}

\bibitem[{Atteson(1999)}]{att}
Atteson, K. 1999. The performance of neighbor-joining methods of phylogenetic
  reconstruction. Algorithmica 25:251--278.
\bibAnnoteFile{att}

\bibitem[{Billera et~al.(2001)Billera, Holmes, and Vogtmann}]{bil}
Billera, L., S.~Holmes, and K.~Vogtmann. 2001. Geometry of the space of
  phylogenetic trees. Adv. Appl. Math. 27:733--767.
\bibAnnoteFile{bil}

\bibitem[{Cavender(1981)}]{cav}
Cavender, J.~A. 1981. Tests of phylogenetic hypotheses under generalized
  models. Math. Biosci. 54:217--229.
\bibAnnoteFile{cav}

\bibitem[{Chang(1996)}]{cha}
Chang, J.~T. 1996. Full reconstruction of markov models on evolutionary trees:
  identifiability and consistency. Math. Biosci. 137:51--73.
\bibAnnoteFile{cha}

\bibitem[{Daskalakis et~al.(2009)Daskalakis, Mossel, and Roch}]{das}
Daskalakis, C., E.~Mossel, and S.~Roch. 2009. Evolutionary trees and the
  {I}sing model on the {B}ethe lattice: a proof of {S}teel's conjecture.
  Probability Theory and Related Fields, in press.
\bibAnnoteFile{das}

\bibitem[{Farris(2008)}]{far}
Farris, J.~S. 2008. Parsimony and explainatory power. Cladistics 24:825--847.
\bibAnnoteFile{far}

\bibitem[{Felsensetein(2004)}]{fel}
Felsensetein, J. 2004. Inferring {P}hylogenies. Sinauer Associates, Sunderland
  MA.
\bibAnnoteFile{fel}

\bibitem[{Felsenstein(1978)}]{fels}
Felsenstein, J. 1978. Cases in which parsimony or compatibility methods will be
  positively misleading. Syst. Zool. 27:401--410.
\bibAnnoteFile{fels}

\bibitem[{Fischer and Thatte(2009)}]{fischer}
Fischer, M. and B.~Thatte. 2009. Revisiting an equivalence between maximum
  parsimony and maximum likelihood methods in phylogenetics. B. Math. Biol., in
  press.
\bibAnnoteFile{fischer}

\bibitem[{Fitch(1971)}]{fitch}
Fitch, W. 1971. Rate of change of concomitantly variable codons. J. Mol. Evol.
  1:84--96.
\bibAnnoteFile{fitch}

\bibitem[{Gaucher and Miyamoto(2005)}]{gau}
Gaucher, E. and M.~Miyamoto. 2005. A call for likelihood phylogenetics even
  when the process of sequence evolution is heterogeneous. Mol. Phylogenet.
  Evol. 37:928--931.
\bibAnnoteFile{gau}

\bibitem[{Grimmett and Stirzaker(2001)}]{gri}
Grimmett, G. and D.~Stirzaker. 2001. Probability and random processes (3rd
  Ed.). Oxford University Press, UK.
\bibAnnoteFile{gri}

\bibitem[{Gronau et~al.(2008)Gronau, Moran, and Snir}]{cho}
Gronau, I., S.~Moran, and S.~Snir. 2008. Fast and reliable reconstruction of
  phylogenetic trees with very short edges. In SODA: ACM-SIAM Symposium on
  Discrete Algorithms Pages~379--388.
\bibAnnoteFile{cho}

\bibitem[{Hillis et~al.(1994b)Hillis, Huelsenbeck, and Swofford}]{hill94}
Hillis, D., J.~P. Huelsenbeck, and D.~L. Swofford. 1994b. Hobgoblin of
  phylogenetics? Nature 369:363--364.
\bibAnnoteFile{hill94}

\bibitem[{Hillis(1996)}]{hill}
Hillis, D.~M. 1996. Inferring complex phylogenies. Nature 383:130--131.
\bibAnnoteFile{hill}

\bibitem[{Huelsenbeck et~al.(2009)Huelsenbeck, Alfaro, and Suchard}]{hue2}
Huelsenbeck, J., M.~Alfaro, and M.~Suchard. 2009. Biologically-inspired
  phylogenetic models strongly outperform the no-common-mechanism model. Sys.
  Biol. in press.
\bibAnnoteFile{hue2}

\bibitem[{Huelsenbeck et~al.(2008)Huelsenbeck, An{\'e}, Larget, and
  Ronquist}]{hue}
Huelsenbeck, J., C.~An{\'e}, B.~Larget, and F.~Ronquist. 2008. A bayesian
  perspective on a non-parsimonious parsimony model. Syst. Biol. 57:406--419.
\bibAnnoteFile{hue}

\bibitem[{Kim(2000)}]{kim}
Kim, J. 2000. Slicing hyperdimensional oranges: the geometry of phylogenetic
  estimation. Mol. Phyl. Evol. 17:58--75.
\bibAnnoteFile{kim}

\bibitem[{Kimura and Crow(1964)}]{kim2}
Kimura, M. and J.~Crow. 1964. The number of alleles that can be maintained in a
  finite population. Genetics 49:725.
\bibAnnoteFile{kim2}

\bibitem[{Lake(1987)}]{lake}
Lake, J. 1987. A rate-independent technique for analysis of nucleic acid
  sequences: Evolutionary parsimony. Mol. Biol. Evol. 4:167--191.
\bibAnnoteFile{lake}

\bibitem[{Lemey et~al.(2009)Lemey, Salemi, and Vandamme}]{handbook}
Lemey, P., M.~Salemi, and A.-M. Vandamme. 2009. The Phylogenetic Handbook: A
  Practical Approach to Phylogenetic Analysis and Hypothesis Testing (2nd Ed.).
  Cambridge University Press.
\bibAnnoteFile{handbook}

\bibitem[{Matsen et~al.(2008)Matsen, Mossel, and Steel}]{mat1}
Matsen, F.~A., E.~Mossel, and M.~Steel. 2008. Mixed-up trees: The structure of
  phylogenetic mixtures. B. Math. Biol. 70:1115--1139.
\bibAnnoteFile{mat1}

\bibitem[{Matsen and Steel(2007)}]{mat2}
Matsen, F.~A. and M.~Steel. 2007. Phylogenetic mixtures on a single tree can
  mimic a tree of another topology. Syst. Biol. 56:767--775.
\bibAnnoteFile{mat2}

\bibitem[{Mossel et~al.(2009)Mossel, Roch, and Steel}]{mos}
Mossel, E., S.~Roch, and M.~Steel. 2009. Shrinkage effect in ancestral maximum
  likelihood. IEEE/ACM Trans. Comput. Biol. Bioinf. 6:126--133.
\bibAnnoteFile{mos}

\bibitem[{Mossel and Steel(2004)}]{mossy}
Mossel, E. and M.~Steel. 2004. A phase transition for a random cluster model on
  phylogenetic trees. Math. Biosci. 187:189--203.
\bibAnnoteFile{mossy}

\bibitem[{Moulton and Steel(2004)}]{mou}
Moulton, V. and M.~Steel. 2004. Peeling phylogenetic `oranges'. Adv. Appl.
  Math. 33:710--727.
\bibAnnoteFile{mou}

\bibitem[{Phillip{\'e} et~al.(2005)Phillip{\'e}, Zhou, Brinkmann, Rodrigue, and
  Delsuc}]{phil}
Phillip{\'e}, H., Y.~Zhou, H.~Brinkmann, N.~Rodrigue, and F.~Delsuc. 2005.
  Heterotachy and long-branch attraction in phylogenetics. BMC Evolutionary
  Biology 5.
\bibAnnoteFile{phil}

\bibitem[{Schulmeister(2004)}]{shu}
Schulmeister, S. 2004. Inconsistency of maximum parsimony revisited. Syst.
  Biol. 53:521--528.
\bibAnnoteFile{shu}

\bibitem[{Semple and Steel(2003)}]{semste}
Semple, C. and M.~Steel. 2003. Phylogenetics. Oxford Univ. Press.
\bibAnnoteFile{semste}

\bibitem[{Sober(2004)}]{so}
Sober, E. 2004. The contest between likelihood and parsimony. Syst. Zool.
  53:6--16.
\bibAnnoteFile{so}

\bibitem[{Sober(2008)}]{sober2}
Sober, E. 2008. Evidence and Evolution: The logic behind the science. Cambridge
  University Press.
\bibAnnoteFile{sober2}

\bibitem[{Steel and Penny(2004)}]{penste1}
Steel, M. and D.~Penny. 2004. Two further links between {MP} and {ML} under the
  poisson model. Appl. Math. Lett. 17:785--790.
\bibAnnoteFile{penste1}

\bibitem[{Steel and Penny(2005)}]{penste2}
Steel, M. and D.~Penny. 2005. Parsimony, phylogeny and genomics (ed. V. Albert)
  chap. Maximum parsimony and the phylogenetic information in multi-state
  characters, Pages~163--178. Oxford University Press.
\bibAnnoteFile{penste2}

\bibitem[{Steel et~al.(2009)Steel, Sz{\'e}kely, and Mossel}]{testing}
Steel, M., L.~Sz{\'e}kely, and E.~Mossel. 2009. Phylogenetic information
  complexity: Is testing a tree easier than finding it? J. Theor. Biol.
  258:95--102.
\bibAnnoteFile{testing}

\bibitem[{Steel and Sz{\'e}kely(2002)}]{inv2}
Steel, M. and L.~A. Sz{\'e}kely. 2002. Inverting random functions (ii):
  explicit bounds for discrete maximum likelihood estimation, with
  applications. SIAM J. Discr. Math. 15:562--575.
\bibAnnoteFile{inv2}

\bibitem[{Steel and Fu(1995)}]{fu}
Steel, M.~A. and Y.~X. Fu. 1995. Classifying and counting linear phylogenetic
  invariants for the {J}ukes-{C}antor model. J. Comput. Biol. 2:39--47.
\bibAnnoteFile{fu}

\bibitem[{Steel and Sz{\'e}kely(2009)}]{inv3}
Steel, M.~A. and L.~Sz{\'e}kely. 2009. Inverting random functions (iii):
  Discrete {MLE} revisited. Ann. Comb. 13:373--390.
\bibAnnoteFile{inv3}

\bibitem[{Steel et~al.(1994)Steel, Sz{\'e}kely, and Hendy}]{ras}
Steel, M.~A., L.~Sz{\'e}kely, and M.~Hendy. 1994. Reconstructing trees when
  sequence sites evolve at variable rates. J. Comput. Biol. 1:153--163.
\bibAnnoteFile{ras}

\bibitem[{Swofford et~al.(1996)Swofford, Olsen, Waddell, and Hillis}]{swo}
Swofford, D.~L., G.~J. Olsen, P.~J. Waddell, and D.~M. Hillis. 1996.
  Phylogenetic inference. chap.~11, Pages~407--514 \emph{in} Molecular
  Systematics, 2nd ed. (D.~M. Hillis, C.~Moritz, and B.~K. Marble, eds.).
  Sinauer Associates, USA.
\bibAnnoteFile{swo}

\bibitem[{Tuffley and Steel(1997)}]{tuf}
Tuffley, C. and M.~Steel. 1997. Links between maximum likelihood and maximum
  parsimony under a simple model of site substitution. B. Math. Biol.
  59:581--607.
\bibAnnoteFile{tuf}

\bibitem[{Wald(1949)}]{wal}
Wald, A. 1949. A note on the consistency of the maximum likelihood estimate.
  Ann. Math. Stat. 20:595--600.
\bibAnnoteFile{wal}

\bibitem[{Warnow et~al.(2006)Warnow, Evans, Ringe, and Nakhleh}]{war}
Warnow, T., S.~N. Evans, D.~Ringe, and L.~Nakhleh. 2006. A stochastic model of
  language evolution that incorporates homoplasy and borrowing. chap.~7,
  Pages~75-- \emph{in} Phylogenetic Methods and the Prehistory of Languages
  (P.~Forster and C.~Renfrew, eds.). McDonald Institute for Archaeological
  Research, Cambridge, UK.
\bibAnnoteFile{war}

\bibitem[{Wu et~al.(2008)Wu, Susko, and Roger}]{wu}
Wu, J., E.~Susko, and A.~J. Roger. 2008. An independent heterotachy model and
  its implications for phylogeny and divergence time estimation. Mol.
  Phylogenet. Evol. 46:801--806.
\bibAnnoteFile{wu}

\end{thebibliography}
\newpage
\section{Appendix: Technical details}

The following Lemmas are required in the proof of Theorem 1. 

\begin{lemma}[Azuma's inequality]
There are several variants of this inequality (see for example \cite{gri}) here we give a special case of a more general version.
Suppose that $X_1, \ldots, X_k$ are independent random variables taking values in some set $S$, and that $Y = f(X_1, \ldots, X_k)$ where $f: S^k \rightarrow \RR$ is any function with the property that
$|f(y_1, \dots, y_k)-f(y_1',\ldots, y_k')| \leq c$
whenever $y_i' = y_i$ for all but one value of $i$. Then $\PP(Y - \EE[Y] \geq x)$ and $\PP(Y-\EE[Y] \leq -x)$ are each less or equal to $\exp(-x^2/2c^2k).$
\end{lemma}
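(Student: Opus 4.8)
The plan is to recognize this statement as the bounded-differences (McDiarmid) inequality and to derive it from the Azuma--Hoeffding martingale inequality via the Chernoff bounding method. First I would fix the natural filtration $\mathcal F_i := \sigma(X_1,\ldots,X_i)$ for $i = 0,1,\ldots,k$ and form the Doob (exposure) martingale $Z_i := \EE[Y \mid \mathcal F_i]$, so that $Z_0 = \EE[Y]$ and $Z_k = Y$. Using independence of $X_1,\ldots,X_k$, I would rewrite $Z_i = g_i(X_1,\ldots,X_i)$, where $g_i(x_1,\ldots,x_i) := \EE[f(x_1,\ldots,x_i,X_{i+1},\ldots,X_k)]$, and then check that $g_i$ inherits the single-coordinate bound from $f$: changing the $i$-th argument alters $g_i$ by at most $c$, since that change is an average over $X_{i+1},\ldots,X_k$ of differences of $f$ at inputs agreeing in all but one coordinate. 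Consequently the martingale increments $D_i := Z_i - Z_{i-1}$ satisfy $\EE[D_i \mid \mathcal F_{i-1}] = 0$ and, conditionally on $\mathcal F_{i-1}$, take values in an interval of length at most $c$; in particular $|D_i| \le c$.

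With the increments controlled, I would apply Hoeffding's lemma to bound the conditional moment generating function, $\EE[e^{\lambda D_i} \mid \mathcal F_{i-1}] \le e^{\lambda^2 c^2/2}$ for every $\lambda$, and iterate the tower property over $i = k, k-1, \ldots, 1$ to obtain $\EE[e^{\lambda(Y-\EE[Y])}] \le e^{k\lambda^2 c^2/2}$. Markov's inequality then gives $\PP(Y - \EE[Y] \ge x) \le e^{-\lambda x + k\lambda^2 c^2/2}$ for all $\lambda > 0$, and optimizing at $\lambda = x/(kc^2)$ yields the stated bound $\exp(-x^2/2c^2k)$. The lower-tail estimate $\PP(Y - \EE[Y] \le -x)$ follows at once by applying the same argument to $-f$, which has the same bounded-differences constant.

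The only mild obstacle is the verification that the conditional expectations $g_i$ retain the single-coordinate Lipschitz constant $c$, so that the Doob increments are genuinely bounded by $c$ — this is precisely the step where the independence hypothesis on the $X_i$ is essential — together with Hoeffding's lemma itself, which is proved by convexity of $t \mapsto e^{\lambda t}$ on the relevant interval. Since all of this is entirely standard, an equally acceptable route is simply to cite the quoted special case directly from \cite{gri}.
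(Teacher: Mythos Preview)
Your argument is correct and is exactly the standard Doob-martingale/Hoeffding's-lemma derivation of McDiarmid's bounded-differences inequality; the bound you obtain matches the statement precisely. The paper itself does not prove this lemma at all --- it simply states it as a known result with a reference to \cite{gri} --- so your final remark that one could equally well just cite the inequality is in fact what the paper does.
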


   \begin{lemma}
   \label{lem1}
   A method $\M$  for estimating the discrete parameter $a \in A$ from sequences of observations in $U$  is statistically consistent under a NCM-M model if it satisfies the following property:
 for each $a \in A$, there is a nest family  $\Tint_k(a), k = 1,2 \ldots$, of increasing  open subsets of  Euclidean space
 with  $\Theta(a)= \bigcup_{k=1}^\infty \Tint_i(a)$, so that the following condition holds: the  probability that $\M$  correctly estimates element $a$ from $(u_1, \ldots, u_k)$, whenever each $u_i$ is generated by
 $(a, \theta_i)$,  with $\theta_i \in \Tint_k(a)$, converges to $1$ as $k$ grows.
  \end{lemma}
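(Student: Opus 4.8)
The plan is to deduce the NCM-consistency of $\M$ (in the sense of the Definition in Section~5) directly from the stated exhaustion property, the only real ingredient being the compactness of the sets $C$ that appear in that definition.

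First I would fix $a \in A$ and an arbitrary compact subset $C$ of $\Theta(a)$. Since $\Theta(a) = \bigcup_{k \geq 1} \Tint_k(a)$ and each $\Tint_k(a)$ is open, the family $\{\Tint_k(a)\}_{k \geq 1}$ is an open cover of $C$; by compactness finitely many of these sets cover $C$, and because the family is nested ($\Tint_1(a) \subseteq \Tint_2(a) \subseteq \cdots$) the member of largest index among them already covers $C$. Hence there is an integer $K = K(C)$ with $C \subseteq \Tint_K(a)$, and therefore $C \subseteq \Tint_k(a)$ for every $k \geq K$.

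Next I would compare the two "probability of correct reconstruction" quantities. For each $k$ let $p_k$ denote the infimum, over all sequences $(\theta_1, \ldots, \theta_k)$ with every $\theta_i \in \Tint_k(a)$, of the probability that $\M$ correctly estimates $a$ from $(u_1, \ldots, u_k)$ when $u_i$ is generated by $(a, \theta_i)$. The hypothesis of the lemma is exactly the assertion that $p_k \to 1$ as $k \to \infty$. Now take any $k \geq K$ and any sequence $\theta_1, \ldots, \theta_k \in C$; by the previous paragraph each such $\theta_i$ lies in $\Tint_k(a)$, so the probability that $\M$ is correct on the corresponding data is at least $p_k$. Letting $k \to \infty$ shows that this probability tends to $1$, uniformly over all admissible choices of $\theta_i \in C$, which is precisely the NCM-consistency condition for the pair $(a, C)$. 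Since $a$ and $C$ were arbitrary, $\M$ is statistically consistent under the NCM--$M$ model.

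The one point that needs care — and the step I would regard as the crux — is the passage from a finite subcover to the single set $\Tint_K(a)$ containing $C$: this is where the hypothesis that the $\Tint_k(a)$ are nested (rather than an arbitrary open exhaustion) is used, and it is also what allows a single index $k$ to play the double role of sample size and of the label of the open set in which all the $\theta_i$ are permitted to lie. Everything else is a routine unwinding of the two definitions of consistency.
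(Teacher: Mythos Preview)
Your proof is correct and follows essentially the same route as the paper's: use compactness of $C$ and the nested open exhaustion to obtain a single index $K$ with $C \subseteq \Tint_K(a)$, then invoke the hypothesis for all $k \geq K$. Your explicit introduction of the infimum $p_k$ is a minor tidying of what the paper leaves implicit, but the argument is otherwise identical.
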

  \begin{proof}
Suppose $C$ is a compact subset of $\Theta(a)$.
Then $C \cap \Tint_k(a), i \geq 1$ is an open cover of $C$. Since $C$ is compact, $C$ is equal to the union of finitely many of the sets $C \cap \Tint_k(a)$, and since the
sets $\Tint_k(a)$, $k \geq 1$ are nested,  a value of $k=k_1$ exists for which $C \subset \Tint_{k_1}(a)$.  By the hypothesis of the Lemma, the event that $\M$ correctly
returns any $a$ from $(u_1,\dots, u_k)$ when each $u_i$ is generated by $(a, \theta_i)$, where $\theta_i \in \Tint_k(a)$, has probability that converges to $1$ as $k$ grows.    Since $C \subseteq \Tint_{k_{1}} \subseteq \Tint_{k}$ for
$k \geq k_1$,
restricting $\theta_i$ to lie in $C$ ensures that the probability $\M$ correctly returns $a$ from $(u_1,\dots, u_k)$ when each $u_i$ is generated by $(a, \theta_i)$, where $\theta_i \in C$ also converges to $1$ as $k$ grows.

\end{proof}

\subsection{Proof of Theorem 1}

{\em Part (i):}  ML estimation under the NCM--$N_r$ model applied to {\em any} sequence of $r$--state characters returns the same tree(s) as maximum parsimony \citep{tuf}. This latter method was shown to be statistically inconsistent for
CM--$N_2$ data \citep{fels}  and, more generally,  for CM--$N_r$  data for $r \geq 2$ by later authors (see \cite{shu}, and the references therein) even for trees on four species.  Since the CM--$N_r$ model is just a submodel of the NCM--$N_r$ model, both assertions in the first claim of Part (i) follows. Specifically, we can take $a$ to be any resolved binary tree on four leaves, and $\Theta(a)$ to be $(0,\infty)^5$ and
select the $\theta_i$ values all to be equal to a choice of branch lengths $\theta \in (0,\infty)^5$ for which maximum parsimony (and thereby ML under NCM--$N_r$) converges on an incorrect tree.

The proof of the second claim, that concerning the NCM--$N_2$ model, follows directly from the examples in
 \cite{mat2}.

For parts (ii--iv) we  will establish the statistical consistency of  various methods by establishing the property described in the following lemma.

{\em Part (ii):}  The proof relies  on the existence of certain linear phylogenetic invariants for the Jukes-Cantor model
  (the existence of such invariants for models that include the Jukes-Cantor was desribed by \cite{lake}).  In particular, from Theorem 1 (part 5) of \cite{fu},  any binary phylogenetic tree $T$ has an associated   function $L_T$ of the site pattern frequencies, such that (i) $L_T({\bf p}) =0$ where  ${\bf p}$ is the probability vector of site patterns generated by $T$ under any assignment of branch lengths,
 and (ii) for any binary phylogenetic tree $T'$ that is different from  $T$, but has the same leaf set, we have $L_{T'}({\bf p}) \geq f_T(u,v)>0$
where $u$ is the shortest branch length,  $v$ is the largest branch length, and $f$ is a continuous function that has the following two properties:
\begin{itemize}
\item for all $u>0$, $f$ is monotone decreasing in $v$ and converges to $0$ as $v$ tends to infinity;
\item for all $v>0$, $f$ is monotone increasing in $u$ and converges to $0$ as $u$ tends to $0$.
\end{itemize}
Although these are all the properties of $f$ we require for the rest of the proof we provide  an explicit description of $f$ is provided as follows:

For a binary tree $T$ on four leaves (a quartet tree) with topology $ij|kl$, a linear invariant $L_T$ for the Jukes-Cantor model was described in \cite{fu} with the properties described
in the proof of Part (ii) of Theorem~\ref{mainthm}, with:
$f(u,v) = \exp(-\frac{4}{3}S)\cdot\left(1-\exp(-\frac{8}{3}l)\right)$ where
$S$ is the sum of the lengths of the four pendant edges, and $l$ is the length of the interior edge.
For a binary tree $T$  with more than four leaves, select any collection $Q$ of quartet trees that define $T$ (i.e. for which $T$ is the unique tree that displays those quartet trees) and take the sum of the
linear invariants just described to give a linear invariant $L_T$. Notice that $L_T$ also satisfies the condition described in the proof of Part (ii) of Theorem~\ref{mainthm}
by taking $f(u,v)$ to be the function:
$\exp(-\frac{4}{3}(2n-4)v)\cdot\left(1-\exp(-\frac{8}{3}u)\right)$
where  $n$ is the number of species (so the number of edges in the pendant edges of any induced quartet tree is at most $(2n-4)$). This is the function $f$ promised.

Returning to the proof of Part (ii), for a site $s$,  let ${\bf X}^s$ be the $4^n$--dimensional vector, indexed by site patterns, that takes the value $1$ for the site pattern observed at site $s$ and $0$ otherwise, and let
$\hat{\bf x}^{(k)} = \frac{1}{k} \sum_{s=1}^k {\bf X}^s.$   Consider the following tree reconstruction method  $\M:  \mbox{Select the binary phylogenetic tree $T$ that minimizes $L_T(\hat{\bf x}^{(k)})$}. $

We will show that
$\M$ is statistically consistent under a NCM--$N_4$ model, by ensuring that the branch lengths in the generating tree at site $i$ lie between $\epsilon_i$ and $L_i$, where these two sequences
converge monotonically to zero and to infinity (respectively), sufficiently slowly with $i$.

To this end, suppose $k$ sites evolve on a fully-resolved tree $T$ under a NCM--$N_4$ model.  Let ${\bf p}^s = \EE[{\bf X}^s]$, the vector of probabilities of the different site patterns at site $s$,
 and let   ${\bf p}^{(k)}: = \frac{1}{k} \sum_{s=1}^k {\bf p}^s$.  By the invariant property of $L_T$,  we have $L_T({\bf p}^s) = 0$ for all $s$, and since $L_T$ is linear, it follows that:
\begin{equation}
\label{L1eq}
\EE[L_T(\hat{\bf x}^{(k)})] = L_T( \EE[\hat{\bf x}^{(k)}] ) = L_T( {\bf p}^{(k)}) = 0 \mbox{ for all } k\geq 1.
\end{equation}

Similarly, for any fully-resolved  phylogenetic tree $T'$ that is different from $T$, but has the same leaf set, we have:
\begin{equation}
\label{L2eq}
\EE[L_{T'}(\hat{\bf x}^{(k)})] = L_{T'}({\bf p}^{(k)}) > f(\epsilon_k, L_k).
\end{equation}
From the continuity of $f$ and its other listed properties, we can allow $\epsilon_k$ to tend to zero and $L_k$ to tend to infinity sufficiently slowly (with increasing $k$) that the following condition is satisfied:
\begin{equation}
\label{epsy}
\lim_{k \rightarrow \infty} k\cdot f^2(\epsilon_k, L_k) \rightarrow \infty.
\end{equation}

Now, since the  ${\bf X}^s: s =1, \ldots, k$ are independent random variables, the Azuma inequality combined with
(\ref{L1eq}) and (\ref{epsy})  gives:
$\lim_{k \rightarrow \infty} \PP\left(L_{T}(\hat{\bf x}^{(k)}) > \frac{1}{2}f(\epsilon_k, L_k)\right) =0;$
while for any alternative fully-resolved phylogenetic tree $T'$ ($\neq T$), Eqns. (\ref{L2eq}) and (\ref{epsy}) give:
$\lim_{k \rightarrow \infty} \PP\left(L_{T'}(\hat{\bf x}^{(k)}) < \frac{1}{2}f(\epsilon_k, L_k)\right) =0.$
Combining these two last equations gives:
$\lim_{k \rightarrow \infty}\PP\left(L_{T}(\hat{\bf x}^{(k)}) < L_{T'}(\hat{\bf x}^{(k)}) \right) =1,$
and so $$ \lim_{k \rightarrow \infty}\PP\left(L_{T}(\hat{\bf x}^{(k)}) < L_{T'}(\hat{\bf x}^{(k)}) \mbox{ for all } T'\neq T \right) =1.$$

By Lemma~\ref{lem1}, this implies that method $\M$ is statistically consistent under the model described.

{\em Part (iii):}  Let $d^{(k)}_{ij}$ denote the proportion of the $k$ sites on which species $i, j$ disagree and let $\mu^{(k)}_{ij} = \EE[d^{(k)}_{ij}]$.  Thus $d^{(k)}_{ij} = \frac{1}{k} \sum_{s=1}^k \xi^{ij}_s$ where $\xi^{ij}_s$ takes the value $1$ if sequences $i$ and $j$ differ at site $s$, and $0$ otherwise.
By the standard theory of reversible $r$--state Markov processes, combined with the molecular clock hypothesis,  for any two species $x,y$,  we can write
\begin{equation}
\label{eqy}
\EE[\xi^{xy}_s] =(1 - \sum_{i=1}^r \pi_{s,i}^2) + \sum_{j=1}^{r-1} \alpha_{s,j}e^{-2\beta_{j,s} t_{xy}}
\end{equation}
 where:
\begin{itemize}
\item
$\pi_{s,i}$ is the  vector of equilibrium base frequency of base $i$ at site $s$;
\item
$-\beta_{s,j}$ are the non-zero eigenvalues of the GTR rate matrix at site $s$;
\item
the coefficients $\alpha_{s,j}$ are positive (and determined by the
eigenvalues of the GTR matrix at site $s$, along with the $\pi_{s,i}$ values);
\item
$t_{xy}$ is the time from when  species $x$ and $y$ diverged in the tree to the present.
\end{itemize}

Consequently, if the generating tree $T$ resolves the  triplet of species $i,j,l$ as the rooted tree $ij|l$ then:
\begin{equation}
\label{xieq}
\EE[\xi^{il}_s] = \EE[\xi^{jl}_s] > \EE[\xi^{ij}_s], \mbox{ and } \EE[\xi^{il}_s] -  \EE[\xi^{ij}_s] \geq g(u_s, v_s)
\end{equation}
 where $g$ is a continuous function that has the same properties as $f$ in the previous proof  and where:
 \begin{itemize}
\item  $u_s$ is the sum of the branch lengths on the path between the least common ancestor of $i,l$ and the least common ancestor of $i,j$ at site $s$ times the substitution rate at site $s$;
\item $v_s$ is the sum of the branch lengths between the root and any leaf  multiplied by the largest magnitude of any eigenvalue of the GTR matrix at site $s$.
\end{itemize}

An explicit description of the function $g$ is as follows: From (\ref{eqy}) we have
$\EE[\xi^{il}_s] -  \EE[\xi^{ij}_s] = \sum_{j=1}^{r-1} \alpha_{s,j}\left(\exp(-2\beta_{j,s} t_{il}) -\exp(-2\beta_{j,s} t_{ij})\right),$
and, using the identity $e^{-x}- e^{-y} = e^{-y}(e^{y-x}-1) \geq e^{-y}(y-x)$ for $0<x<y$, we have:
$$\EE[\xi^{il}_s] -  \EE[\xi^{ij}_s] \geq 2\sum_{j=1}^{r-1} \alpha_{s,j}\beta_{j,s} \exp(-2\beta_{j,s} t_{il})\cdot  (t_{il}-t_{ij}).$$
Now the term $\sum_{j=1}^{r-1} \alpha_{s,j}\beta_{j,s}$ is the substitution rate at site $s$, and so we can set $g(u,v) = 2u_se^{-2v_s}.$

Thus, if we let $\mu^{(k)}_{ij} = \EE[d^{(k)}_{ij}]$ then
\begin{equation}
\label{mueq}
\mu^{(k)}_{il} = \mu^{(k)}_{jl} \mbox{ and } \mu^{(k)}_{il}-\mu^{(k)}_{ij}> g(\epsilon_k, L_k),
\end{equation}
 where $\epsilon_k = \min\{u_s:  1\leq s \leq k\}$ and $L_k= \max\{v_s: 1\leq s \leq k\}$.

As in the previous proof, by the continuity of $g$ and its other listed properties, we can allow $\epsilon_k$ to tend to zero and $L_k$ to tend to infinity sufficiently slowly (with increasing $k$) that  $\lim_{k \rightarrow \infty} k\cdot g^2(\epsilon_k, L_k) \rightarrow \infty.$
Then by Azuma's inequality:
\begin{equation}
\label{asu}
\lim_{k \rightarrow \infty}  \PP\left(\max_{ij} |d^{(k)}_{ij} - \mu^{(k)}_{ij}| \geq \frac{1}{2}g(\epsilon_k, L_k)\right)=0.
\end{equation}
Note that,  by (\ref{mueq}) the $\mu$ values are additive on $T$ and each interior edge has a branch length of at least $g(\epsilon_k, L_k)$.
  We can thus invoke the `safety radius' result of \cite{att} which guarantees that neighbor-joining applied to the matrix of  $d^{(k)}_{ij}$ values, for all pairs $i,j$, will return $T$ provided that each pairwise distance $d^{(k)}_{ij}$ differs from $\mu^{(k)}_{ij}$ by at most  $\frac{1}{2}g(\epsilon_k, L_k)$.  This last event has probability converging to $1$ as $k$ grows by (\ref{asu}) and so Part (iii) now follows from Lemma~\ref{lem1}.

{\em Part (iv):}
  For any data consisting of a sequences of characters on a set of species, the only phylogenetic trees that have positive likelihood under the NCM--$N_\infty$ model are those on which the data are homoplasy-free (i.e. require no reverse or convergent evolutionary events).   Thus,  it suffices to show that for $k$ characters generated by a NCM--$N_\infty$ model on $T$, the probability that $T$ is the only phylogenetic tree for the given species
on which these characters are homoplasy-free converges to $1$ as $k \rightarrow \infty$. Following  \cite{war}, it suffices to show  that the following event $E_k$ has probability converging to $1$ as $k$ grows: $E_k$ is the
event that for each induced quartet tree $ab|cd = T|\{a,b,c,d\}$ of $T$, at least one of the $k$ characters assigns the same state to $a$ and $b$, and the same state to $c$ and $d$,  and with  these two states being different.
By the independence assumption between changes on different edges in the $N_\infty$ model, and by the Bonferroni inequality, we have:
\begin{equation}
\label{binoeq}
\PP(E_k) \geq 1 - \binom{n}{4} \cdot \prod_{s=1}^k (1- p_sq_s^4)
\end{equation}
where $p_s$ (respectively $q_s$) is the smallest substitution probability on an edge (respectively the largest substitution probability on a path) for the process that generates site $s$.
Thus, provided that the branch lengths at site $s$ are bounded between $(\epsilon_k, L_k)$ where $\epsilon_k$ converges to $0$ sufficiently slowly, and that $L_k$ converges to infinity sufficiently slowly (with increasing $k$), then
$\lim_{k \rightarrow \infty} \PP(E_k)=1$, by (\ref{binoeq}).
Part (iv) now follows from Lemma~\ref{lem1}.

\subsection{Proof that ML under an $\epsilon$--constrained NCM can be statistically inconsistent}

For ${\bf u}=(u_1,\ldots, u_k) \in U^k$ and a fully resolved tree $a$, let $L_a({\bf u}) $ be the log of the maximum likelihood value of the data ${\bf u}$
under the NCM--$N_r$ model. From \cite{tuf} we have:
\begin{equation}
\label{par1}
L_a({\bf u}) = -(l({\bf u}, a) + k)\cdot \log(r),
\end{equation}
where $l({\bf u}, a)$ is the  parsimony score of ${\bf u}$ on $a$.
Similarly, for $\epsilon>0$, let $L_a^\epsilon({\bf u}) $ be the log of the maximum likelihood value of the data ${\bf u}$
under the constrained NCM--$N_r$ model on tree $a$ in which each branch length is required to lie between $\epsilon$ and $-\log(\epsilon)$.  Clearly, $L_a^\epsilon({\bf u})  \leq L_a({\bf u})$.

Consider the following way to `prune' branch lengths in any tree $c$ which associates to each vector of branch lengths $\theta$ a corresponding set of branch lengths $\theta^\epsilon$ that
satisfy the $\epsilon$ constraint:  For each branch length shorter than $\epsilon$ re-set that
branch length to $\epsilon$, and for each branch length larger than $-\log(\epsilon)$ reset it to $-\log(\epsilon)$.  This transformation
$\theta \mapsto \theta^\epsilon$ enjoys the following property for the $N_r$ model:
For any site pattern $u \in U$ we have:
$$\PP(u|c, \theta^\epsilon) \geq \PP(u|c, \theta) - O(\epsilon),$$
where $\PP(u|c, \theta^{(\epsilon)})$ is the probability of generating $u$ on tree $c$ with branch lengths $\theta^{(\epsilon)}$ and
where $O(\epsilon)$ is a term that depends just on $\epsilon$ and the number of leaves in the tree, and which tends to zero as $\epsilon \rightarrow 0$.
It follows that:
\begin{equation}
\label{par4}
\frac{1}{k}L_c^\epsilon({\bf u}) \geq \frac{1}{k} L_c({\bf u}) - O(\epsilon).
\end{equation}

Now, by elementary algebra:
\begin{equation}
\label{oxt}
\frac{1}{k}(L_a^\epsilon({\bf u}) - L_b^\epsilon({\bf u}))= \Delta_1+\Delta_2+\Delta_3
\end{equation}
where:
$$\Delta_1= \frac{1}{k}(L_a^\epsilon({\bf u}) - L_a({\bf u})) \leq 0,$$
$$\Delta_2=\frac{1}{k}(L_a({\bf u}) - L_b({\bf u})), \mbox{ and } \Delta_3=\frac{1}{k}(L_b({\bf u}) - L_b^\epsilon({\bf u})).$$
Now, for any two trees $a, b$, Eqn. (\ref{par1}) gives:
\begin{equation}
\label{felso}
\Delta_2 =\left (\frac{l({\bf u}, b)}{k} - \frac{l({\bf u}, a)}{k}\right)\cdot \log(r).
\end{equation}
If ${\bf u}$ is generated by a CM--$N_r$ on $a$ with branch lengths in the interior of the Felsenstein Zone (a region of branch lengths for tree $a$ where maximum parsimony converges on an incorrect tree) then, for tree $b$ having a different topology from $a$ the term
$\frac{l({\bf u}, b)}{k} - \frac{l({\bf u}, a)}{k}$ in (\ref{felso}) converges in probability to a  negative constant $-C$ (the actual value of which is dependent on the branch lengths used in the Felsenstein Zone setting).

Regarding $\Delta_3$, we can apply Inequality~(\ref{par4}) for $c=b$ and select $\epsilon$ sufficient small (but strictly positive) so that (i) the branch lengths used in the Felsenstein Zone setting are all
greater than $\epsilon$ and less that $-\log(\epsilon)$ and (ii) the $O(\epsilon)$ term in (\ref{par4}) is less than $C\log(r)$ and so, for all $k \geq 1$ and all ${\bf u} \in U^k$:
\begin{equation}
\label{par5}
\frac{1}{k}(L_b({\bf u}) -  L_b^\epsilon({\bf u})) \leq  \frac{1}{2}C\log(r).
\end{equation}
Thus, from (\ref{oxt}) and (\ref{par5}) we have:
\begin{equation}
\label{oxy}
\frac{1}{k}(L_a^\epsilon({\bf u}) - L_b^\epsilon({\bf u})) \leq \Delta_2 + \frac{1}{2}C\log(r),
\end{equation}
for $\epsilon>0$ sufficiently small.

Since $\Delta_2$ converges in probability to $-C\log(r)$ with increasing $k$, it follows from (\ref{oxy}) that the probability that $ \frac{1}{k}(L_a^\epsilon({\bf u}) - L_b^\epsilon({\bf u}))$ is negative when $(u_1, \ldots, u_k)$ is generated under the CM--$N_r$ model tends to $1$ as $k$ grows.
That is, ML estimation under an NCM--$N_r^\epsilon$ model is
statistically inconsistent,  for data generated under the CM--$N_r^\epsilon$ model (or a NCM--$N_r^\epsilon$ model) when the branch lengths lie within the Felsenstein Zone for tree $a$, and $\epsilon$ is chosen sufficiently small (relative to those
branch lengths).

\newpage

\end{document}